\newtheorem{theorem}{Theorem}
\theoremstyle{mdpi}
\address[1]{%
Dept. Computer Science and Engineering, Jaypee University of Information Technology, Waknaghat, Solan 173215, Himachal, India; E-Mail: suman.saha@juit.ac.in, sp.ghrera@juit.ac.in} 
\abstract{Community detection in a complex network is an important problem of much interest in recent years. In general, a community detection algorithm chooses an objective function and captures the communities of the network by optimizing the objective function, and then, one uses various heuristics to solve the optimization problem to extract the interesting communities for the user.
In this article, we demonstrate the procedure to transform a graph into points of a metric space and develop the methods of community detection with the help of a metric defined for a pair of points. We have also studied and analyzed the community structure of the network therein. 
The results obtained with our approach are very competitive with most of the well-known algorithms in the literature, and this is justified over the large collection of datasets. On the other hand, it can be observed that time taken by our algorithm is quite less compared to other methods and justifies the theoretical findings.
}
\begin{document}
\vspace{-12pt}

\section{Introduction}

The rise of on-line networking communities in real-world graphs, such as large social networks, web graphs and biological networks, have initiated the important direction of network community \mbox{detection 
\cite{Freeman78centralityin, CarringtonScott:2005, newman2003structure, radicchi2004}}. 
A network community (also known as a module or cluster) is typically a group of nodes with more interactions among its members than the remaining part of the \mbox{network \cite{weiss, Schaeffer200727, Fortunato10}}.
To extract such group of nodes of a network, one typically selects an objective function that captures the intuition of a community as a set of nodes with better internal connectivity than \mbox{external \cite{NewGir04,Luxburg07}}. 
The objective is generally NP-hard to optimize \cite{Schaeffer200727, NewGir04}; 
heuristics \cite{danon2005, CosciaGP11} or approximation \mbox{algorithms \cite{Schaeffer200727}} are used in practice to find sets of nodes that approximately optimize the objective function, which is interpreted as real communities.

 Another important approach is to define communities as the output of an algorithm that converges automatically, with some intuitive hope to extract good communities \cite{LeskovecLM10, raghavan}. 
Identified communities have some different importance in different domains.
In social networks, community means an organizational unit, in a biochemical network, a functional unit, in a collaboration network, a scientific discipline, and so on \cite{YangL12}.

Our observations regarding the development of network community detection algorithms are as follows:
(1) the network community detection is not easy NP-hard, like data clustering, due to the lack of good heuristics;
(2) both graph traversal-based methods and spectral methods are computationally overloaded due to the verification of the objective function value, which is required to guide the next iteration and;
(3) the rich literature of clustering is not very suitable for graph data. 

Some methods are available for network community detection, which tries to develop a similarity or distance function among the nodes of a complex network and to use that similarity or distance for partitioning the 
network \cite{Pons04, duch-2005, Chakrabarti04a, MacropolS10, LevoratoP11, BrandesGW03,bullmore2009complex}.
Most of the methods of community detection, based on similarity or distance, mainly use the shortest path, 
Jaccard similarity, set similarity or Euclidean distance, and they are less successful for network community detection in terms of conductance and modularity. In some cases, weighted graph are a requirement, which is not always obtained naturally in real networks. 
Complex networks are characterized by a small average path length and a high clustering coefficient; the way the metric is defined should be able to capture the crucial properties of complex networks. Therefore, we need to create the metric very carefully, so that it can explore the underlying community structure of the real-life networks. 

In this work, we develop the notion of a metric among the nodes using some new matrices derived from the modified adjacency matrix of the graph, which is flexible over the networks and can be tuned to enhance the structural properties of the network required for community detection. 
The main contributions of this work include: 
\begin{itemize}
\item A detailed study of the community detection algorithms. 
\item Transforming a graph to a metric space, preserving its structural properties.
\item Studying the complex properties of real-world networks on induced metric space.
\item Developing community detection algorithms on induced metric space.
\item Analyzing the results and complexities of the developed algorithms. 
\item Comparing the community detection algorithms with other existing methods.
\end{itemize}

The rest of this paper is organized as follows: Section \ref{gtm-rel} describes the state of the art of the network community detection literature. In Section \ref{gtm-met}, the problem of transforming a graph into a metric space is discussed, and the properties of a real complex network are studied.
In \mbox{Section \ref{gtm-comm}}, the problem of network community detection is formulated, and several possible solutions are presented in the induced metric space. Furthermore, the initialization procedures, termination criteria and convergence are discussed in detail. 
The results of the comparison between community detection algorithms are illustrated in \mbox{Section \ref{gtm-res}}. The computational aspects of the proposed framework are also discussed in this section.

\section{Network Community Detection}
\label{gtm-rel}

Community detection in real networks aims to capture the structural organization of the network using the connectivity information as the input \cite{NewGir04,Schaeffer200727}.
Early work on this domain was attempted by Weiss and
Jacobson while searching for a work group within a government \mbox{agency \cite{weiss}}.

Most of the methods developed for network community detection are based on a two-step approach.
The first step is specifying a quality measure (evaluation measure, objective function) that quantifies the desired properties of communities, and the second step is applying algorithmic techniques to assign the nodes of a graph into communities by optimizing the objective function.

Several measures for quantifying the quality of communities have been proposed; they mostly consider that communities are a set of nodes with many edges between them and few connections with nodes of different communities. Some of the community evaluation measures are described in the next subsection.

\subsection{Community Evaluation}

Several measures for quantifying the quality of communities have been proposed:

\begin{itemize}
\item Modularity: The notion of modularity is the most popular for network community detection purposes. 
The modularity index assigns high scores to communities whose internal edges are more than that expected in a random network model, which preserves the degree distribution of the given network.

\item Internal density: Density is defined by the number of edges ($m_s$) in subset $S$ divided by the total number of possible edges between all nodes $(n_s(n_s-1)/2)$. The ``$2$'' is there to cancel out duplicated edges. Internal density $= m_s/(n_s(n_s-1)/2)$

\item Edges inside: This is somewhat useless by itself, since it is not related to any other attributes of subset $S$; the total number of edges ($m_s$) present in subset $S$. Edges inside $= m_s$

\item Average degree: This is the average internal degree across all nodes ($n_s$) in subset $S$. 
Average degree $= 2m_s/n_s$

\item The fraction over the median degree: This determines the number of nodes that have an internal degree greater than the median degree of nodes in subset $S$.

\item Triangle Participation Ratio: The best measure for density, cohesiveness, and clustering within the goodness scales. Robust under random and expand perturbations. The fraction of nodes in $S$ that belong to a triad. TPR = (number of nodes belonging to a triad)$/n$.

\item Expansion: This measure of separability gives the average number of external connections ($c_s$) per node ($n_s$) in subset $S$ with graph $G$. It can be thought of as the external degree. Expansion $= c_s/(n_s(n-n_s))$.

\item Cut ratio: This metric is a measure of separability and can be thought of as external density. It is the fraction of external edges ($c_s$) of subset S
 out of the total number of possible edges in graph $G$.

\item Conductance: This is the ratio of edges inside the cluster to the number of edges leaving the cluster (captures the surface area to volume ratio). It measures best in separability (goodness scale), measuring well-separated non-overlapping communities. It is robust under node swap and shrink perturbation. Community-like sets of nodes have lower conductance.

\item Normalized cut: This represents how well subset $S$ is separated from graph $G$. It sums up the fraction of external edges over all edges in subset $S$ (conductance) with the fraction of external edges over all non-community edges.

\item Maximum out degree fraction: This metric first finds the fraction of external connections to internal connections for each node ($n_s$) in $S$. It then returns the fraction with the highest value.

\item Average out degree fraction: This is the sum of the individual fraction of edges outside of the community over the total connections of a node in subset $S$. It is then divided by the total number of nodes ($n_s$) in subset $S$.

\item Flake out degree fraction: This is a fraction of the number of nodes that have fewer internal connections than external connections to the number of nodes ($n_s$) in subset $S$.
\end{itemize}

There are several other measures of quality determination for a network community. However, the most widely-used measures are modularity and conductance. The majority of the algorithms are developed using either of the measures as their optimization criteria.

\subsection{Popular Algorithms}

In this subsection, we give a brief list of the algorithms developed for network community detection purposes. The basic approach and the complexity of execution is also given briefly (Table \ref{cnalgo}) in this subsection. 
\begin{itemize}
\item Fast greedy algorithm: This algorithm was developed by Newman {\textit{et al.}} \cite{newman03fast, Clauset2004}. It is modularity based and uses a hierarchical agglomerative approach. It is called fast greedy, because it is significantly faster than older algorithms and uses a greedy method.

\item Walktrap algorithm: This algorithm by Pons and Latapy \cite{Pons04} uses a hierarchical agglomerative method. Here, the distance between two nodes is defined in terms of a random walk process. The basic idea is that if two nodes are in the same community, the probability to get to a third node located in the same community through a random walk should not be very different. The distance is constructed by summing these differences over all nodes, with a correction for degree.

\item Eigenvector algorithm: This algorithm by Newman \cite{Newman:2006} is modularity based, and it uses an optimization method inspired by graph partitioning techniques. It relies on the eigenvectors of a so-called modularity matrix, instead of the graph Laplacian traditionally used in \mbox{graph partitioning}.

\item Label propagation algorithm: This algorithm by Raghavan \textit{et al.} \cite{raghavan} uses the concept of node neighborhood and the diffusion of information in the network to identify communities. Initially, each node is labeled with a unique value. Then, an iterative process takes place, where each node takes the label that is the most spread in its neighborhood. This process goes on until one of several conditions is met, for instance no label change. The resulting communities are defined by the last label values.

\item Spinglass algorithm: This algorithm by Reichardt and Bornholdt \cite{spinglass} is an optimization method relying on an analogy between the statistical mechanics of complex networks and physical spinglass models
\end{itemize}

There are more algorithms developed to solve the network community detection problem; a complete list can be obtained in several survey articles \cite{Fortunato10, LeskovecLM10, YangL12}. Some interesting recent articles are \cite{Deritei14, Mina09, pan14, Lee13, alde13, DeMeo14, Moustafa13}.

A partial list of algorithms developed for network community detection purpose is tabulated in 
Table \ref{cnalgo}. The algorithms are categorized into three main groups as spectral (SP), graph traversal based (GT) and semi-definite programming based (SDP). The categories and complexities are also given in the Table \ref{cnalgo}. 
\begin{table}[H]
\caption{Algorithms for network community detection and their complexities. GT, graph traversal; SDP, semi-definite programming; SP, spectral.}
\label{cnalgo}
\centering
\tiny
\begin{tabular}{lcll} 
\toprule 
{\bf Author} & {\bf Ref.} & {\bf Cat.}
{\bf  (No.)} & {\bf Order} \\[0.5ex]\hline
Van Dongen 	& (Graph clustering, 2000 \cite{dongen}) & GT(1) & $O(nk^2 )$, $k < n$ parameter \\
Eckmann and Moses & (Curvature, 2002 \cite{Eckmann2002}) & GT(2) & $O(m k^2 )$ \\
Girvan and Newman & (Modularity, 2002 \cite{girvan02}) & SDP(1) & $O(n^2 m)$ \\
Zhou and Lipowsky & (Vertex proximity, 2004 \cite{ZhouL04}) & GT(3) & $O(n^3 )$ \\
Reichardt and Bornholdt & (Spinglass, 2004 \cite{spinglass}) & SDP(2) & parameter dependent \\
Clauset \textit{et al.} 	& (Fast greedy, 2004 \cite{Clauset2004}) & SDP(3) & $O(n log_2 n)$ \\
Newman and Girvan 	& (Eigenvector, 2004 \cite{NewGir04}) & SP(1) & $O(nm^2 )$ \\
Wu and Huberman & (Linear time, 2004 \cite{linear}) 		& GT(4) & $O(n + m)$\\
Fortunato \textit{et al.} & (Infocentrality, 2004 \cite{infocentrality}) & SDP & $O(m^3 n)$ \\
Radicchi \textit{et al.} & (Radicchi \textit{et al.}, 2004 \cite{radicchi2004}) & SP(2) & $O(m^4 /n^2 )$\\
Donetti and Munoz & (Donetti and Munoz, 2004 \cite{Donetti}) & SDP(4) & $O(n^3 )$ \\
Guimera \textit{et al.} & (Simulated annealing, 2004 \cite{Guimera04}) & SDP(5) & parameter dependent \\ 
Capocci \textit{et al.} & (Capocci \textit{et al.}, 2004 \cite{Capocci04}) & SP(3) & $O(n^2 )$ \\
Latapy and Pons 		& (Walktrap, 2004 \cite{Pons04}) & SP(4) & $O(n^3 )$ \\
Duch and Arenas & (Extremal optimization, 2005 \cite{duch05}) & GT(5) & $O(n^2 log n)$ \\
Bagrow and Bollt & (Local method, 2005 \cite{Bagrow05}) & SDP(6) & $O(n^3 )$ \\
Palla \textit{et al.} & (overlapping community, 2005 \cite{palla05}) & GT(6) & $O(exp(n))$ \\
Raghavan \textit{et al.} & (label propagation, 2007 \cite{raghavan}) & GT(7) & $O(n + m)$\\
Rosvall and Bergstrom & (Infomap, 2008 \cite{rosvall2008random}) & SP(5) & $O(m)$ \\
Ronhovde and Nussinov & (Multiresolution community, 2009 \cite{Ronhovde09}) & GT(8) & $O(m\beta log n)$, $\beta \approx 1.3$ \\[1ex]\bottomrule
\end{tabular}
\end{table}

\subsection{Observations and Motivations}
\label{motivation}

Community detection is an extensively studied research problem of network science. However, a good algorithm for a large real network is still in demand for research communities. Two major criteria to be satisfied by good algorithms are: (1) they must find a partition of the network that is optimal with respect to modularity or conductance; and (2) the algorithm should be computationally efficient on large networks. The notable pitfalls of the existing algorithms are that most of the algorithms developed based on spectral methods or semi-definite programming rely on global optimization and need to compute the costlier functions under the evaluation criteria in each iteration and increase the burden of computation drastically, thus becoming inefficient for large networks. On the other hand, graph-based algorithms rely on local heuristic method or exhaustive search. The algorithms based on exhaustive search are not suitable for large networks. However, the local methods are computationally good, but fail to achieve a close value from the optimal modularity for large networks.  

A good alternative is to transform a network to a metric space, where we can achieve good optimality along with automatic convergence, thus leading to less computational burden for large networks; but, we need to create the metric very carefully, so that it can explore the underlying community structure of the real-life networks.

\section{Graph to Metric Space Transformation}
\label{gtm-met}

In this section, we demonstrate the procedure to transform a graph into points of a metric space and develop the methods of community detection with the help of a metric defined for a pair of points. We have also studied and analyzed the community structure of the network therein. 

As discussed in sub-section \ref{motivation}, the nodes of the graph do not lie on a metric space, e.g., edges do not reflect the Euclidean distance between the nodes. The standard Euclidean distance and spherical distance defined over the adjacency or Laplacian matrices above failed to capture similarity information among the nodes of a complex network. On the other hand, the algorithms developed based on the shortest path or Jaccard similarity are computationally inefficient and have less success in terms of standard evaluation criteria (like conductance and modularity). 

In this work, we have tried to develop the notion of similarity among the nodes using some new matrices derived from the adjacency matrix and the degree matrix of the graph.
Let $A$ be the adjacency matrix and $D$ the degree matrix of the graph $G= (V, E)$. The Laplacian $L = D -A$. We have defined two diagonal matrices of the same size $D (\lambda)$ and $D (\lambda_{x})$, where $\lambda$ is a parameter determined from the given graph and can be optimized from the optimization criteria of the problem under consideration.
In $D(\lambda)$, a fixed optimally-determined value is used in the diagonal entries of the matrix $D$, and in $D(\lambda_{x})$, a variable value, also optimally determined, is used in the diagonal entries of the matrix $D$.
The similarities are defined on matrices $L_1$ and $L_2$, where $L_1 = D(\lambda) +A$ and \mbox{$L_2 = D(\lambda_{x}) +A$}, respectively, are the spherical similarity among the rows and determined by applying a concave function $\phi$ over the standard notions of similarities, like the Pearson coefficient ($\sigma_{PC}$), the Spacerman coefficient ($\sigma_{SC}$) or the cosine similarity ($\sigma_{CS}$). $\phi(\sigma)()$ must be chosen using the chord condition to obtain a metric.


\subsection{Graph to Metric Space Algorithm}
\label{metric}

In this subsection, we demonstrate the algorithm to convert the nodes of the graph to the points of a metric space preserving the community structure of the graph.
The algorithm depends on the sub-modules (1) construction of $L_x$ ($L_1$ or $L_2$) and (2) obtaining a structure-preserving distance function. The algorithm works by picking a pair of nodes from $L_x$ and computing the distance defined in the second module.

\subsubsection{$L_x$ Construction} 

The $L_1$ is defined as $L_1 = D(\lambda) +A$, where $A$ is the adjacency matrix of the given network and $D(\lambda)$ is a diagonal matrix of the same size with diagonal values equal to a non-negative \mbox{constant $\lambda$}.

The $L_2$ is defined as $L_2 = D(\lambda_{x}) +A$, where $A$ is the adjacency matrix of the given network and $D(\lambda_{x})$ is a diagonal matrix of the same size with diagonal values determined by a non-negative function
 $\lambda_{x}$ of the node $x$.

The choice of $\lambda$ and $\lambda_{x}$ plays a crucial role in combination with the function chosen in the second module for the determination of a suitable metric and is discussed later in this subsection.

\subsubsection{Function Selection} 

The function selection module determines the metric for a pair of nodes. 
The function selector $\phi$ converts a similarity function (Pearson coefficient ($\sigma_{PC}$), Spacerman coefficient ($\sigma_{SC}$) or cosine similarity ($\sigma_{CS}$)) into a distance matrix. In general, the similarity function satisfies the positivity and similarity condition of the metric, but not the triangle inequality.
$\phi$ is a metric-preserving ($\phi(d(x_i,x_j)= d_{\phi}(x_i,x_j)$), concave and monotonically-increasing function. The three conditions above are referred to as the chord condition. The $\phi$ function is chosen to have minimum internal area \mbox{with the chord}.

\subsubsection{Choice of $\lambda$ and $\phi(\sigma)()$} 

The choices in the above sub-modules play a crucial role in the graph to metric transformation algorithm to be used for community detection. The complex network is characterized by a small average diameter and a high clustering coefficient. Several studies on network structure analysis reveal that there are hub nodes and local nodes characterizing the interesting structure of the complex network.
 Suppose we have taken $\phi= arccos$, $\sigma_{CS}$ and constant $\lambda \geq 0$. $\lambda = 0$ penalizes the effect of the direct edge in the metric and is suitable to extract communities from a highly dense graph. 
$\lambda = 1$ places a similar weight of the direct edge, and the common neighbor reduces the effect of the direct edge in the metric and is suitable to extract communities from a moderately dense graph. $\lambda = 2$ sets more importance for the direct edge than the common neighbor (this is the common case of available real networks). 
$\lambda \geq 2$ penalizes the effect of the common neighbor in the metric and is suitable for extracting communities from a very sparse graph. 

The choice of $\lambda$ depends on the data complexity for community detection (DCC) value (sub-section \ref{complexity}) of the input graph, {\em i.e.}, whether it is sparse or dense, and its cluster structure. 
 
The algorithm for transforming a graph to the points of a metric space is given in Algorithm \ref{alg1}.
 
  \begin{theorem}

 $M =(V, d)$ constructed in the above Algorithm \ref{alg1} is a metric space with respect to the metric $d$, \textit{i.e.},:

\end{theorem}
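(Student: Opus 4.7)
The plan is to verify the four axioms of a metric space for $d$: non-negativity, symmetry, identity of indiscernibles, and the triangle inequality. Writing $r_i$ for the $i$-th row of $L_x$, the constructed distance has the form $d(x_i,x_j) = \phi(\sigma(r_i,r_j))$, where $\sigma \in \{\sigma_{PC},\sigma_{SC},\sigma_{CS}\}$ is a symmetric row-similarity and $\phi$ is the concave, monotonically increasing, metric-preserving function built from the chord condition.

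Non-negativity and symmetry come for free. All three similarity functions are symmetric in their arguments, so $\sigma(r_i,r_j) = \sigma(r_j,r_i)$, and because $\phi$ is chosen to land in $[0,\infty)$ (for instance $\arccos$ composed with $\sigma_{CS}$ returns the angular distance in $[0,\pi]$), one gets $d(x_i,x_j) \geq 0$ for every pair. For identity of indiscernibles, if $x_i = x_j$ the rows $r_i$ and $r_j$ coincide, forcing $\sigma$ to attain its maximal value and $\phi$ to vanish there. For the converse, $d(x_i,x_j)=0$ combined with the strict monotonicity of $\phi$ forces $\sigma(r_i,r_j)$ to be maximal, which in the cosine case means $r_i$ and $r_j$ are positive scalar multiples of each other. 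This is where the diagonal augmentation by $D(\lambda)$ with $\lambda>0$ (or $D(\lambda_x)$ with positive values) becomes indispensable: the nonzero diagonal entry of $r_i$ sits at column $i$ while that of $r_j$ sits at column $j$, so proportionality can hold only when $i=j$. A brief check on the choice of $\lambda$ or $\lambda_x$ closes this axiom.

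The triangle inequality is the main technical content, and I would decompose it in two stages. First, the chosen similarity $\sigma$ induces an underlying (pseudo)metric $d_0$ on row vectors: for $\sigma_{CS}$ the angular distance $\arccos(\sigma_{CS}(\cdot,\cdot))$ on the unit sphere satisfies the spherical triangle inequality, while $\sigma_{PC}$ and $\sigma_{SC}$ reduce to cosine similarity after centering or ranking, so the same holds there. Second, I would invoke the standard metric-preservation lemma: if $\phi\colon[0,\infty)\to[0,\infty)$ is concave, non-decreasing, and satisfies $\phi(0)=0$, then $\phi$ is subadditive, and consequently $\phi\circ d_0$ is again a metric. Subadditivity follows in one line from concavity together with $\phi(0)=0$, which is precisely the geometric statement that the graph of $\phi$ lies above its chord from the origin, that is, the chord condition named in the paper. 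Composing, $d(x_i,x_k)=\phi(d_0(r_i,r_k))\leq \phi(d_0(r_i,r_j)+d_0(r_j,r_k))\leq \phi(d_0(r_i,r_j))+\phi(d_0(r_j,r_k))=d(x_i,x_j)+d(x_j,x_k)$, where the first inequality uses monotonicity of $\phi$ and the second uses its subadditivity.

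The principal obstacle is not any individual axiom but reconciling the paper's informal chord condition with the precise hypotheses on $\phi$ needed for subadditivity, and verifying that the concrete candidates the authors have in mind (most naturally $\phi=\arccos$ paired with $\sigma_{CS}$) genuinely satisfy $\phi(0)=0$ over the effective range of $\sigma$ after the similarity-to-distance conversion. Once that bookkeeping is settled the four axioms line up and $(V,d)$ is a bona fide metric space.
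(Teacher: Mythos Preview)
Your proposal is substantially more detailed than the paper's own treatment, which does not provide a proof in any meaningful sense: the paper simply asserts that ``the proof of the theorem is straight forward'' and then lists the three metric axioms without any verification. There is therefore nothing to compare against beyond the bare claim; your direct axiom-by-axiom verification is the same \emph{strategy} the paper gestures at, but you actually carry it out.

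Your argument is essentially sound. Two remarks are worth recording. First, your observation that identity of indiscernibles relies on the diagonal augmentation is well taken and in fact exposes a gap the paper leaves open: Algorithm~1 permits $\lambda \geq 0$, and $\lambda = 0$ makes $d$ only a pseudometric (distinct nodes with identical neighbourhoods collapse); even $\lambda = 1$ fails when two adjacent vertices share all other neighbours, since their augmented rows then coincide. Second, your treatment of the triangle inequality via the angular metric $\arccos \circ \sigma_{CS}$ followed by a concave subadditive $\phi$ is the correct mechanism, but note that Algorithm~1 as written sets $d = \phi(1 - \sigma_{CS})$ with $\phi$ called ``affine,'' and $1 - \sigma_{CS}$ by itself violates the triangle inequality (e.g., unit vectors at angles $0$, $\pi/3$, $2\pi/3$). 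A genuinely nonlinear $\phi$ such as $\arccos$ or the square root is needed, which you implicitly acknowledge in your final paragraph; the paper does not.
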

 The proof of the theorem is straight forward and satisfies the following metric properties:         
\begin{itemize}
\item $d(v_i,v_j) \geq 0$ $and$ $d(v_i,v_i)= 0$
\item $d(v_i,v_j) = d(v_j,v_i)$ 
\item $d(v_i,v_j) \leq d(v_i,v_k) + d(v_k,v_j)$
\end{itemize}   
             
\begin{algorithm}[H]
\begin{algorithmic}[1]
\REQUIRE $G =(V, E)$
\ENSURE $M= (V, d)$ 
\STATE $D^{\lambda}_{(n \times n)} = \begin{cases} 0 ~if~ i \neq j \\\lambda \geq 0 ~if~ i= j \end{cases}$ 
\STATE $A= D^{\lambda} +E$
\FOR{$i=1$ to $n$}
 \FOR{$j=1$ to $n$}
\STATE $d(v_i, v_j) =\phi( 1- \frac{a_i \cdot a_j}{|a_i||a_j|})$, where $v_i, v_j \in V$ and $a_k$ is the $k$-th row of $A$ and $\phi$ is an affine function.
\ENDFOR
\ENDFOR
\RETURN $M= (V, d)$
\end{algorithmic}
\caption{Mapping a graph into the metric space.}
\label{alg1}
\end{algorithm}


\section{Community Detection on Induced Metric Space}
\label{gtm-comm}

In this section, we explore the k partitioning algorithm for the purpose of network community detection by using the metric space constructed above for each graph. We have also studied and analyzed the advantages of the k partitioning method over the standard algorithm for network community detection.

\subsection{k-Partitioning}
\label{partition}

The community detection methods based on {\em k}-partitioning of a graph are possible using the newly-defined node distance, because the nodes of the graph are converted into the points of a metric space. The {\em k}-partitioning of a graph uses this distance converges automatically and does not compute the value of objective function in iterations; therefore, it reduces the computation compared to standard graph partitioning methods. The results of {\em k}-partitioning of a graph using a metric are competitive on the large set of networks shown in Section \ref{gtm-res}. The algorithm for community detection using {\em k}-partitioning and its detailed analysis is given below (Algorithm \ref{alg2}). Before that, we need to determine the value of {\em k}, and that is discussed in the next sub-section. 

\subsection{k Selection}
\label{k-selection} 

Determining the optimal number of {\em k} is an important problem for community detection researchers. An extensive analysis can be found in the work of Leskovec {\em \textit{et al.}} \cite{Leskovec2008}. The standard practice is to solve an optimization equation with respect to {\em k} for which the optimal value of the objective function is achieved. Another method based on farthest first traversal is also very useful in terms of computational efficiency \cite{Gonzalez85}. For small networks, the global optimization works better, and for a very large network, the second choice gives a faster approximate solution.

\subsection{Initialization for k-Partitioning} 
The set of initial nodes are also a very important problem for the {\em k} partitioning algorithm:
\begin{itemize}
\item Input: graph $G = (V,E)$, with the node similarity $sim(x_a,x_b)$ defined on it,
\item Output: A partition of the nodes into $k$ communities $C_1, C_2,..., C_k $,
\item Objective function: Maximize the minimum intra-community similarity: 
$$min_{j\in\{1,2,..,k\}}max_{x_a,x_b \in C_j} ~~ sim(x_a,x_b)$$ 

\end{itemize}

\begin{algorithm}[H]
\begin{algorithmic}[1]
\REQUIRE $M= (V, d)$ 
\ENSURE $T= \{C_1, C_2, \dots, C_k \}$ with minimum $cost(T)$
\STATE Initialize centers $z_1, \dots, z_k \in R^n$ and clusters $T= \{C_1, C_2, \dots, C_k\}$ 
\REPEAT 
	\FOR{$i=1$ to $k$}	
		\FOR{$j=1$ to $k$}
			\STATE $C_i \leftarrow \{x \in V ~s.t.~ |z_i-x| \leq |z_j-x|\}$
		\ENDFOR
	\ENDFOR 
	\FOR{$j=1$ to $k$}	
		\STATE $z_i \leftarrow mean(C_i)$
	\ENDFOR 
\UNTIL $|cost(T_t) - cost(T_{t+1})| = 0$
\RETURN $T= \{C_1, C_2,\dots, C_k \}$

\end{algorithmic}
\caption{{\em k}-center partitioning algorithm.}
\label{alg2}
\end{algorithm}

\subsection{Convergence}
\label{convergence} 

Convergence of the network community detection algorithms is the least studied research area of network science. However, the rate of convergence is an important issue, and a low rate of convergence is the major pitfall of most of the existing algorithms. Due to the transformation into the metric space, our algorithm is equipped with the quick convergence facility of the k-partitioning on the metric space by providing a good set of initial points. Another crucial pitfall suffered by the majority of the existing algorithms is the validation of the objective function in each iteration during convergence. Our algorithm converges automatically to the optimal partition, thus reducing the cost of validation during convergence. 

\begin{theorem}
During the course of the $k$ center partitioning algorithm, the cost monotonically decreases.
\end{theorem}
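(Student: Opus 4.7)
The plan is to mirror the classical Lloyd-type argument for $k$-means and show that each of the two sub-steps of one outer iteration of Algorithm 2 can only decrease (or keep equal) the cost, so that their composition does as well. Throughout, let
\[
\mathrm{cost}(T) \;=\; \sum_{i=1}^{k} \sum_{x \in C_i} |z_i - x|^2,
\]
with both the partition $T=\{C_1,\dots,C_k\}$ and the centers $\{z_1,\dots,z_k\}$ being the decision variables that change from iteration $t$ to iteration $t{+}1$. Here $|\cdot|$ is interpreted in the ambient Euclidean space $\mathbb{R}^n$ in which the points of the metric space $M=(V,d)$ are represented (this is the only place some care is needed, since Algorithm 2 explicitly places the centers $z_i$ in $\mathbb{R}^n$ and uses $|z_i-x|$; by the preceding Theorem, the embedding is consistent with the metric $d$).

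First I would analyze the \emph{assignment step} (lines 3--7), in which the centers are held fixed and every point is reassigned to the closest current center. For every individual point $x$ with previously-assigned center $z_{\mathrm{old}(x)}$ and newly-assigned center $z_{\mathrm{new}(x)}$, the defining inequality of the reassignment rule gives $|z_{\mathrm{new}(x)}-x|^2 \le |z_{\mathrm{old}(x)}-x|^2$. Summing over all $x\in V$ yields that the cost after reassignment is no larger than the cost before, with the centers unchanged.

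Next I would analyze the \emph{update step} (lines 8--10), in which the partition is held fixed and each center is replaced by the mean of its cluster. The key ingredient is the standard centroid identity: for any finite set $C\subset\mathbb{R}^n$ and any $z\in\mathbb{R}^n$,
\[
\sum_{x\in C} |z - x|^2 \;=\; \sum_{x\in C} |\mathrm{mean}(C) - x|^2 \;+\; |C|\cdot |z-\mathrm{mean}(C)|^2 ,
\]
so the mean is the unique minimizer of the within-cluster sum of squared distances. Hence, holding $C_i$ fixed and replacing $z_i$ by $\mathrm{mean}(C_i)$ for each $i$ can only decrease (or leave unchanged) the per-cluster contribution, and therefore the total cost.

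Chaining the two sub-step inequalities gives $\mathrm{cost}(T_{t+1}) \le \mathrm{cost}(T_t)$ for every $t$, which is precisely the claim of monotone decrease; equality characterizes the fixed points and matches the stopping criterion on line 11. The main (and only real) obstacle is the metric/ambient-space issue flagged above: the paper's Algorithm 2 mixes a metric-space input with a Euclidean-style mean update, so I would spend a sentence at the start justifying that the embedding from Algorithm 1 makes $|z_i - x|$ a legitimate surrogate for $d(z_i,x)$, after which both sub-step arguments are purely mechanical and follow the textbook Lloyd proof.
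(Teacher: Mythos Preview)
Your proposal is correct and follows essentially the same two-step Lloyd-type decomposition as the paper's own proof: first bound the cost drop from the assignment step, then from the recentering step, and chain the inequalities. You supply considerably more detail (the centroid identity, the per-point inequality, and the metric/Euclidean embedding caveat) than the paper, which simply asserts the two inequalities $cost(T^{t+1},Z^t)\le cost(T^t,Z^t)$ and $cost(T^{t+1},Z^{t+1})\le cost(T^{t+1},Z^t)$ without further justification.
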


\begin{proof}
Let $Z^t = \{z_1^t ,\dots , z_k^t\}$ , $T^t = \{ C_1^t ,\dots, C_k^t\}$ denote the centers and clusters at the start of the $t$-th iteration of the $k$ partitioning algorithm. The first step of the iteration assigns each data point to its closest center; therefore, $cost(T^{t+1},Z^t) \leq cost(T^t,Z^t) $.

In the second step, each cluster is re-centered at its mean; therefore, $cost(T^{t+1},Z^{t+1}) \leq cost(T^{t+1},Z^t)$. 

\end{proof}

\begin{theorem}
If $T$ is the solution returned by farthest-first traversal and $T^{o}$ is the optimal solution, then $cost(T^{o}) \leq cost(T) \leq 2cost(T^{o}) $.
\end{theorem}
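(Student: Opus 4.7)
The left inequality is immediate from the definition of $T^{o}$ as a cost-minimizer over all $k$-partitions, so the substance of the claim is the upper bound $cost(T)\leq 2\,cost(T^{o})$. My plan is to run the standard Gonzalez-style $2$-approximation argument for $k$-center, which goes through verbatim here because the earlier theorem of Section \ref{metric} guarantees that $d$ is a genuine metric (so the triangle inequality is available).

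Let $z_1,\ldots,z_k$ be the centers produced by farthest-first traversal in the order they were chosen, let $c=cost(T)$, and let $x\in V$ be a point realizing this cost, i.e.\ $\min_j d(x,z_j)=c$. The key intermediate claim I would prove is that the $k+1$ points $\{z_1,\ldots,z_k,x\}$ are pairwise at distance at least $c$. For any $i<j$, the farthest-first selection rule forces $z_j$ to maximize $\min_{l<j} d(\cdot,z_l)$, so in particular
\[
d(z_i,z_j)\;\geq\;\min_{l<j}d(z_j,z_l)\;\geq\;\min_{l<j}d(x,z_l)\;\geq\;\min_{1\leq l\leq k}d(x,z_l)\;=\;c,
\]
using the fact that dropping terms can only increase a minimum. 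The bound $d(x,z_j)\geq c$ holds by the definition of $x$, completing the claim.

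Now I would invoke pigeonhole against the optimal clustering: $T^{o}$ has $k$ clusters, each of radius at most $r:=cost(T^{o})$ around its optimal center, while $\{z_1,\ldots,z_k,x\}$ has $k+1$ elements. Hence two of these points, say $p$ and $q$, sit inside a common optimal cluster with center $z^{o}$, and the triangle inequality for $d$ gives $d(p,q)\leq d(p,z^{o})+d(z^{o},q)\leq 2r$. Combined with the lower bound from the claim, $c\leq d(p,q)\leq 2r$, which is precisely $cost(T)\leq 2\,cost(T^{o})$.

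The one place where I expect a subtle point rather than routine calculation is the inequality chain establishing the pairwise lower bound: one has to resist the temptation to bound $d(z_i,z_j)$ only by the final-iteration ``score'' of $z_j$ and instead transfer the $c$-bound at iteration $j$, which works only because the minimum over a subset of indices dominates the minimum over all of them. Everything else is a direct consequence of the metric axioms supplied by the theorem following Algorithm \ref{alg1}, so no further machinery beyond farthest-first traversal, pigeonhole, and the triangle inequality should be needed.
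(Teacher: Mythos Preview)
Your argument is correct: it is exactly the classical Gonzalez $2$-approximation proof for $k$-center, and every step (the farthest-first pairwise lower bound, pigeonhole against the optimal clustering, and the triangle inequality) goes through in this setting because $d$ is a metric. The paper itself does not actually prove the theorem; it simply refers the reader to \cite{Gonzalez85}, so your write-up supplies precisely the details the paper omits rather than taking a different route.
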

\begin{proof}
The proof of the theorem can be obtained in \cite{Gonzalez85}.

\end{proof}

\subsection{Data Complexity}
\label{complexity}

The key characteristics of complex network are ``high clustering coefficient'' and ``small average path length''.
The first property justifies the community structure of the network, whereas the second property justifies the small world phenomena of real networks.
Given a network, that is given a number of nodes and a number of edges, what are the bounds of the average distance and clustering coefficient? The two properties of the optimal complex network (OCN) are (1) the minimum possible average distance and (2) the maximum possible clustering coefficient.
There is usually a unique graph with the largest average clustering, which at the same time has the smallest possible average distance.
In contrast, there are many graphs with the same minimum average distance, ignoring their average clustering.
The objective of this work is to measure the community detectability of the complex network, 
$G(N, m, L, C)$, where $N$ is the number of vertices, $m$ is the number of edges, $L$ is the average path length and $C$ is the average clustering coefficient.

Average path length: $L_{N,m}$.
The smallest possible average distance
of a graph with $N$ vertices and $m$ edges we denote $L_{N,m} = \frac{1}{m}\sum_{u,v \in E} d(u,v)$.

Clustering coefficient: If $d_u(> 1)$ is the degree of a vertex $u$ and $t_u$ is the number of edges among its neighbors, its clustering coefficient is
 $C(u) = t_u/ \left(\begin{array}{c} d_u \\ 2 \end{array}\right)$.

In some graphs, community detection is easy, and most of the algorithms work very well (e.g., disjoint cliques). On the other hand, in some graphs, community detection is very difficult, and some algorithms rarely work well (e.g., circular graph).

Data complexity of community detection: Informally, Given a graph with $N$ vertices and $m$ edges $G(N, m)$, to what extent we can reveal the community structure is the data complexity for community detection of that graph. Data complexity for community detection (DCC) is denoted as ($\alpha(G(N,m, L,C))$), $\alpha(G(N,m, L,C))$ near zero for a graph for which is is easy to detect community and $\alpha(G(N,m, L,C))$ near one with no community structure. 
DCC is calculated as the ratio between common edges of $G^{*}(N,m,L,C)$ and $G(N,m, L,C)$ with $m$ the number of edges of $G$ or $G^{*}$, where $G^{*}(N,m,L,C)$ is a graph with the same average path length constructed by adding the minimum number of edges to an empty graph of $N$ nodes followed by the addition of more edges to obtain the total number $m$ by maximizing the clustering coefficient.

A higher value of DCC for a particular network signifies that we can extract a good community structure of the network; however, a lower value of DCC signifies that none of the algorithms are very useful to capture the community structure of the network. Another advantage of DCC is that it can assess the quality of an algorithm. When DCC is high and the value of the evaluation measure is low, it simply signifies that there is enough room to improve the algorithm.

\section{Experiments and Results}
\label{gtm-res}

We performed many experiments to test the proposed network detection method via induced metric space over several real networks given in Table \ref{cndata}. The objective of the experiment is to verify the behavior of the algorithm and the time required to compute the algorithm. One of the major goals of the experiment is to see the behavior of the algorithm with respect to the change of values of the crucial limits of the data and the parameters of the algorithm.

Experiments are also conducted to compare the results (Tables \ref{cnres}, \ref{cnresmod} and \ref{cnrestime}) of our algorithm with the state-of-the art-algorithms (Table \ref{cnalgo}) available in the literature in terms of common measures mostly used by the researchers of the domain of network community detection. The details of several experiments and the analysis of the results are given in the following subsections. 

\subsection{Experimental Designs}

Experiment for comparison: In this experiment, we compared several algorithms for network community detection with our proposed algorithm based on metric space. The experiment is performed on a large list of network datasets. Two versions of the experiment are developed for comparison purposes based on two different quality measures: conductance and modularity. The results are shown in the Tables \ref{cnres} and \ref{cnresmod}, respectively.

Experiment on the performance and time: In this experiment, we evaluated our algorithm for the performance on the network collection (Table \ref{cndata}). We evaluated the time taken by our algorithm on different sizes of networks, and this is shown in the Table \ref{cnrestime}.


\subsection{Performance Indicator}

Modularity: The notion of modularity is the most popular for network community detection purposes. 
The modularity index assigns high scores to communities whose internal edges are more than expected in a random network model, which preserves the degree distribution of the \mbox {given network}.

Conductance: Conductance is widely used in the graph partitioning literature. 
The conductance of a set $S$ with complement $S^{C}$ is the ratio of the number of edges connecting nodes in $S$ to nodes in $S^{C}$ by the total number of edges incident to $S$ or to $S^{C}$ (whichever number is smaller).

\subsection{Datasets}

A list of real networks taken from several real-life interactions is considered for our experiments, and they are in Table \ref{cndata} below. We have also listed the number of nodes, the number of edges, the average diameter, the data complexity for community detection (DCC) and the {\em k} value used (sub-section \ref{k-selection}). The values of the last column can be used to assess the quality of detected communities, as discussed in the sub-section \ref{complexity}. 

\begin{table}[H]
\caption{Complex network datasets and the values of their parameters. DCC, data complexity for community detection.}
\label{cndata}
\centering
\tiny
\begin{tabular}{l c l l c c c} 
\toprule

 {\bf Name} & {\bf Type} & {\bf No. of Nodes} & {\bf No. of Edges} & {\bf Diameter} & {\bf DCC} &  {\bf k} \\ [0.5ex] 
 \hline
Facebook 	& U & 4039 	& 88,234	& 4.7 & 0.72498 & 164 \\	
Gplus 		& D & 107,614 	& 13,673,453 	& 3 & 0.50073 & 457 \\
Twitter		& D & 81,306 	& 1,768,149 	& 4.5 & 0.57072 & 213 \\	
Epinions1 	& D & 75,879 	& 508,837 	& 5 & 0.14001 & 128 \\
LiveJournal1 	& D & 4,847,571 & 68,993,773 	& 6.5 & 0.27432 & 117 \\	
Pokec 		& D & 1,632,803 & 30,622,564 	& 5.2 & 0.10971 & 246 \\	
Slashdot0811 	& D & 77,360 	& 905,468 	& 4.7 & 0.05884 & 81 \\	
Slashdot0922 	& D & 82,168 	& 948,464 	& 4.7 & 0.06340 & 87 \\	
Friendster	& U & 65,608,366& 1,806,067,135 & 5.8 & 0.16231 & 833 \\
Orkut		& U & 3,072,441	& 117,185,083 	& 4.8 & 0.16689 & 756 \\	
Youtube		& U & 1,134,890 & 2,987,624 	& 6.5 & 0.08090 & 811 \\	
DBLP	 	& U & 317,080	& 1,049,866 	& 8 & 0.63307 & 268 \\
Arxiv-AstroPh	& U & 18,772	& 396,160 	& 5 & 0.65841 & 23 \\	
web-Stanford	& D & 281,903 	& 2,312,497 	& 9.7 & 0.60034 & 69 \\	
Amazon0601	& D & 403,394	& 3,387,388	& 7.6 & 0.41890 & 92 \\	
P2P-Gnutella31	& D & 62,586	& 147,892 	& 6.5 & 0.00710 & 35 \\	
RoadNet-CA	& U & 1,965,206 & 5,533,214 	& 500 & 0.40458 & 322 \\	
Wiki-Vote 	& D & 7115 	& 103,689 	& 3.8 & 0.17048 & 21 \\ [1ex] 	
\bottomrule
\end{tabular}
\end{table}

\subsection{Computational Results}

In this subsection, we compare two groups of algorithms for network community detection with our proposed algorithm based on metric space. The experiment is performed on a large list of network datasets. Two versions of the experiment are developed for comparison purposes based on two different quality measures: conductance and modularity. The results based on conductance are shown in the \mbox{Table \ref{cnres}}, and the results based on modularity are shown in the Table \ref{cnresmod}, respectively. Regarding the two groups of algorithms, the first group contains algorithms based on semi-definite programming, and the second group contains algorithms based on graph traversal approaches. For each group, we have taken the best value of conductance in Table \ref{cnres} and the best value of modularity in Table \ref{cnresmod} among all of the algorithms in the groups. The results obtained with our approach are very competitive with most of the well-known algorithms in the literature, and this is justified over the large collection of datasets. On the other hand, it can be observed that time taken (Table \ref{cnrestime}) by our algorithm is quite less compared to other methods and justifies the theoretical findings described in Sections \ref{gtm-met} and \ref{gtm-comm}.

\begin{table}[H]
\caption{Comparison of our approaches with other best methods in terms of conductance; the numbers inside the brackets denote the algorithm of the group.}
\label{cnres}
\centering
\tiny
\begin{tabular}{l c c c c} 
 \toprule
 {\bf Name} 		& {\bf Spectral} 	& {\bf SDP} 	 & {\bf GT}  & {\bf Metric} \\ [0.5ex] 
 \hline
Facebook 	&  0.0097(5) & 0.1074(3) & 0.1044(7) & 0.1082 \\
Gplus  	&  0.0119(5) & 0.1593(3) & 0.1544(7) & 0.1602 \\
Twitter 	&  0.0035(5) & 0.0480(3) & 0.0465(7) & 0.0483 \\
Epinions1 	&  0.0087(5) & 0.1247(6) & 0.1208(7) & 0.1254 \\
LiveJournal1 	&  0.0039(5) & 0.0703(6) & 0.0680(7) & 0.0706 \\
Pokec  	&  0.0009(4) & 0.0174(3) & 0.0168(7) & 0.0175 \\
Slashdot0811 	&  0.0005(5) & 0.0097(6) & 0.0094(7) & 0.0098 \\
Slashdot0922 	&  0.0007(4) & 0.0138(3) & 0.0133(5) & 0.0138 \\
Friendster 	&  0.0012(5) & 0.0273(1) & 0.0263(7) & 0.0273 \\
Orkut 	&  0.0016(5) & 0.0411(3) & 0.0397(7) & 0.0412 \\
Youtube 	&  0.0031(5) & 0.0869(3) & 0.0838(7) & 0.0871 \\
DBLP 		&  0.0007(4) & 0.0210(3) & 0.0203(7) & 0.0211 \\
Arxiv-AstroPh 	&  0.0024(5) & 0.0929(6) & 0.0895(7) & 0.0931 \\
web-Stanford 	&  0.0007(5) & 0.0320(1) & 0.0308(7) & 0.0320 \\
Amazon0601 	&  0.0018(5) & 0.0899(6) & 0.0865(7) & 0.0900 \\
P2P-Gnutella31 &  0.0009(5) & 0.0522(6) & 0.0503(7) & 0.0523 \\
RoadNet-CA 	&  0.0024(5) & 0.1502(3) & 0.1445(7) & 0.1504 \\
Wiki-Vote 	&  0.0026(5) & 0.1853(6) & 0.1783(7) & 0.1855 \\[1ex] 
 \bottomrule
\end{tabular}
\end{table}

\begin{table}[H]

\caption{Comparison of our approaches with other best methods in terms of modularity; the numbers inside the brackets denote the algorithm of the group.}
\label{cnresmod}
\centering
\tiny
\begin{tabular}{l c c c c} 
 \toprule
 {\bf Name} 		& {\bf Spectral} & {\bf SDP} 	& {\bf GT} 	 & {\bf Metric} \\ [0.5ex] 
 \hline
Facebook 	& 0.4487(1) & 0.5464(4) & 0.5434(5) & 0.5472 \\
Gplus 	& 0.2573(1) & 0.4047(3) & 0.3998(5) & 0.4056 \\
Twitter 	& 0.3261(3) & 0.3706(1) & 0.3691(7) & 0.3709 \\
Epinions1 	& 0.0280(1) & 0.1440(3) & 0.1401(5) & 0.1447 \\
LiveJournal1 	& 0.0791(1) & 0.1455(5) & 0.1432(5) & 0.1458 \\
Pokec 	& 0.0129(3) & 0.0294(1) & 0.0288(5) & 0.0295 \\
Slashdot0811 	& 0.0038(1) & 0.0130(4) & 0.0127(7) & 0.0131 \\
Slashdot0922 	& 0.0045(1) & 0.0176(5) & 0.0171(5) & 0.0176 \\
Friendster 	& 0.0275(4) & 0.0536(5) & 0.0526(7) & 0.0536 \\
Orkut 	& 0.0294(3) & 0.0689(4) & 0.0675(5) & 0.0690 \\
Youtube 	& 0.0096(1) & 0.0934(2) & 0.0903(5) & 0.0936 \\
DBLP 		& 0.4011(5) & 0.4214(1) & 0.4207(5) & 0.4215 \\
Arxiv-AstroPh & 0.4174(3) & 0.5079(3) & 0.5045(5) & 0.5081 \\
web-Stanford 	& 0.3595(5) & 0.3908(4) & 0.3896(7) & 0.3908 \\
Amazon0601 	& 0.1768(1) & 0.2649(4) & 0.2615(7) & 0.2650 \\
P2P-Gnutella31 & 0.0009(1) & 0.0522(2) & 0.0503(5) & 0.0523 \\
RoadNet-CA 	& 0.0212(3) & 0.1690(4) & 0.1633(5) & 0.1692 \\
Wiki-Vote 	& 0.0266(1) & 0.2093(1) & 0.2023(5) & 0.2095 \\[1ex] 
 \bottomrule
\end{tabular}
\end{table}

\begin{table}[H]

\caption{Comparison of our approaches with other best methods in terms of time.}
\label{cnrestime}
\centering
\tiny
\begin{tabular}{l c c c c } 
 \toprule
 {\bf Algorithm}	& {\bf Spectral} 	& {\bf SDP} 	& {\bf GT} 		& {\bf Metric} \\ [0.5ex] \hline
Minimum Time 	& 884 		& 910 	& 871 		& 869 	\\
Maximum Time 	& 1386 	& 1725 & 1641 	& 869		\\
Average Time 	& 917 		& 981 	& 1338 	& 869 		\\ [1ex] 
 \bottomrule
\end{tabular}
\end{table}

\subsection{Parameter Settings} 

The values of several parameters are very crucial in our algorithm. Here, we discuss the different settings of $k$, $\lambda$, DCC and the affine function. For each datum described in Table \ref{cndata}, the $k$ value is obtained by optimizing the conductance value, as described in Subsection \ref{k-selection}, and the values are provided in \mbox{Table \ref{cndata}}. For small datasets (not considered for our experiments), the results are very sensitive to $k$, whereas for large networks (all of the above list), the results are less sensitive to $k$. The value $\lambda$ is taken $\lambda=2$ in all of the computation above; however, the results can be improved more by optimizing $lambda$. The DCC value provides us prior information about the community structure; it can be observed that we obtained good community structure where the DCC value is high. In all of the experiments described above, the $\phi(\sigma)()$ is constructed with the arccos function and cosine similarity.  

\subsection{Results Analysis and Achievements}

In this subsection, we describe the analysis of the results obtained in our experiments shown above and also highlight the achievements from the results. It is clearly evident from the results shown in Tables \ref{cnres}, \ref{cnresmod} and \ref{cnrestime} that the proposed metric-based method for network community detection provides very good competitive performance with respect to conductance modularity and time. However, a good community detection algorithm must provide the results close to the unknown optimal community structure. To assess the optimality, we have considered the best results of each class of algorithms and treated them as one of the best known estimate to the optimal community structure of the network. It is also evident from the results that our method provides results very close to the considered estimates of optimal communities.


\section{Conclusions}

Network community detection became an important research problem in recent years. 
In this article, we have demonstrated and analyzed a new approach to network community detection via metric space induced by the graph. The main achievement of the work was to use the rich literature of clustering in metric space. Clustering is easy NP-hard in metric space, whereas network community detection is NP-hard. 
The results obtained with our approach were very competitive with most of the well-known algorithms in the literature and justified over the large collection of datasets. Our algorithm converges automatically to optimal clustering. It does not require verifying the objective function value to guide the next iteration, like popular approaches, thus saving the time of computation.


\acknowledgments{Acknowledgments}

This work is supported by the Jaypee University of Information Technology.


\authorcontributions{Author Contributions}

Suman Saha proposed the algorithm and prepared the manuscript. Satya P. Ghrera was in charge of the
overall research and critical revision of the paper.


\conflictofinterests{Conflicts of Interest}

The authors declare no conflict of interest. 

\bibliographystyle{mdpi}
\makeatletter
\renewcommand\@biblabel[1]{#1. }
\makeatother



%


%

\end{document}